\documentclass[letterpaper, 10 pt, conference]{ieeeconf}  %

\IEEEoverridecommandlockouts                              %

\usepackage{amsmath} %
\usepackage{amssymb}  %
\usepackage[caption=false,font=footnotesize]{subfig}
\usepackage{graphicx}
\usepackage{enumerate}
\usepackage{tabularx}
\usepackage{makecell}
\usepackage[dvipsnames]{xcolor}
\usepackage[hidelinks]{hyperref}
\usepackage{xurl}
\usepackage{cite}   %
\newcounter{theorem}
\renewcommand{\thetheorem}{\arabic{theorem}}

\newcounter{lemma}
\renewcommand{\thelemma}{\arabic{lemma}}

\newcounter{proposition}
\renewcommand{\theproposition}{\arabic{proposition}}

\newcounter{corollary}
\renewcommand{\thecorollary}{\arabic{corollary}}

\newcounter{definition}
\renewcommand{\thedefinition}{\arabic{definition}}

\newcounter{assumption}

\newcounter{remark}
\renewcommand{\theremark}{\arabic{remark}}

\makeatletter
\newcommand{\thm@note}[1]{%
  \if\relax\detokenize{#1}\relax
  \else\ (\emph{#1})%
  \fi
}

\newenvironment{theorem}[1][]{%
  \refstepcounter{theorem}%
  \textbf{Theorem~\thetheorem}\thm@note{#1}. \itshape
}{%
}

\newenvironment{proposition}[1][]{%
  \refstepcounter{proposition}%
  \textbf{Proposition~\theproposition}\thm@note{#1}. \itshape
}{%
}

\newenvironment{assumption}[1][]{%
  \refstepcounter{assumption}%
  \par\addvspace{6pt}\noindent
}{%
}

\newenvironment{remark}[1][]{%
  \refstepcounter{remark}%
  \textbf{Remark~\theremark}\thm@note{#1}. \normalfont
}{%
}
\makeatother

\title{\LARGE \bf
Verifiable Error Bounds for Physics-Informed Neural KKL Observers *
}

\author{Hannah Berin-Costain$^{1,\dagger}$, Harry Wang$^{2,\dagger}$, Kirsten Morris$^{1}$ and Jun Liu$^{1}$%
\thanks{*This research was funded in part by the Natural Sciences and Engineering Research Council (NSERC) of Canada.}%
\thanks{$^{\dagger}$Hannah Berin-Costain and Harry Wang contributed equally to this work.}%
\thanks{$^{1}$Department of Applied Mathematics,  University of Waterloo, 
        Waterloo, ON N2L 3G1, Canada
        {\tt\small hiberinc@uwaterloo.ca; kmorris@uwaterloo.ca; j.liu@uwaterloo.ca}}%
\thanks{$^{2}$University of Toronto,
        Toronto, ON M5S 1A1, Canada
        {\tt\small zijin.wang@mail.utoronto.ca}}%
}

\begin{document}

\maketitle
\thispagestyle{empty}
\pagestyle{empty}

\begin{abstract}
This paper proposes a computable state-estimation error bound for learning-based Kazantzis--Kravaris/Luenberger (KKL) observers. Recent work learns the KKL transformation map with a physics-informed neural network (PINN) and a corresponding left-inverse map with a conventional neural network. However, no computable state-estimation error bounds are currently available for this approach. We derive a state-estimation error bound that depends only on quantities that can be certified over a prescribed region using neural network verification. We further extend the result to bounded additive measurement noise and demonstrate the guarantees on nonlinear benchmark systems.
\end{abstract}

\begin{keywords} 
Kazantzis--Kravaris/Luenberger (KKL) observer, physics-informed neural network (PINN), neural network verification, error bounds
\end{keywords}

\addtolength{\topmargin}{2.5mm}

\section{Introduction}

In many control applications, the state of the system is not directly measurable and must be estimated from the measured output. An \emph{observer} is a dynamical system driven by measured outputs that produces an estimate of the state. A notable example is the Luenberger observer \cite{luenberger_observing_1964} for linear systems. Kazantzis-–Kravaris/Luenberger (KKL) observers \cite{kazantzis_nonlinear_nodate} extend the Luenberger observer to nonlinear systems. 

KKL observers were initially proposed in \cite{shoshitaishvili1990singularities, shoshitaishvili1992control_branching} and later rediscovered in \cite{kazantzis_nonlinear_nodate}. The central idea is to construct an injective transformation, characterized as a solution of an associated partial differential equation (PDE), that embeds the nonlinear state into an observer coordinate system. In these coordinates, one selects asymptotically stable dynamics that are linear up to output injection and driven by the measured output. Running the resulting observer from an arbitrary initial condition yields an estimate in the observer coordinates, which is then mapped back to a state estimate via a left-inverse of the transformation.

Kazantzis and Kravaris \cite{kazantzis_nonlinear_nodate} established local existence guarantees near equilibria using Lyapunov's Auxiliary Theorem. A general existence theory was later developed by Andrieu and Praly \cite{andrieu_existence_2006} using an observability-type condition known as \emph{backward distinguishability}, which ensures the injectivity of the transformation. Under further observability assumptions, \cite{andrieu_convergence_2014} establishes exponential convergence and tunability. Extensions to non-autonomous and controlled systems were subsequently developed in \cite{engel_nonlinear_2007, bernard_luenberger_2017, bernard_luenberger_2019}. Despite these existence guarantees, the main practical obstacle is constructing the transformation and its left-inverse \cite{andrieu_remarks_2021, bernard_observer_2022}. 

To address this challenge, Ramos et al. proposed a \emph{learning-based} approach that approximates the transformation and its left-inverse using neural networks \cite{ramos_numerical_2020}. Their method generates paired data by simulating the nonlinear system alongside the observer dynamics from many initial conditions, and then fits neural-network approximations of the transformation and its left-inverse using supervised learning. More recent methods incorporate the PDE that governs the desired mapping by employing a Physics-Informed Neural Network (PINN) framework \cite{niazi_learning-based_2023, niazi_kkl_2025}. This approach incorporates the physical knowledge encoded in the PDE by augmenting the standard supervised objective with a PDE residual term, improving accuracy and generalization. Related work on robustness includes \cite{zhao_robust_nodate}, which proposes a learning-based observer based on a Volterra integral operator with a specially designed kernel to accelerate convergence and improve robustness to measurement noise.

For safety- and performance-critical applications, it is desirable to obtain a certified upper bound on the estimation error that holds uniformly over a prescribed domain. However, for learning-based observers, computable error bounds have not yet been developed. Existing analyses typically yield bounds that depend on non-computable quantities (e.g., terms involving the unknown left-inverse map or worst-case approximation gaps) and thus do not provide a concrete, checkable guarantee over a specified operating region. 

This paper addresses this gap by developing a computable estimation-error bound for learning-based KKL observers. We derive a Lyapunov-based bound that captures the effect of using an approximate solution and combine it with certified neural-network bounds computed using $\alpha,\beta$-CROWN \cite{zhang_efficient_2018, xu_automatic_2020, xu_fast_2021, wang_beta-crown_2021} to upper-bound the key quantities governing the estimation error. We demonstrate the resulting certificate on representative nonlinear benchmarks, including the reverse Duffing and Van der Pol oscillators, where the certified bound upper-bounds observed estimation-error trajectories.

\section{Preliminaries}

This section briefly reviews KKL observer theory. For further background on KKL observers, see \cite{kazantzis_nonlinear_nodate, andrieu_existence_2006, andrieu_convergence_2014, pachy_existence_2024, brivadis_further_2023}.

\subsection{KKL Observers}

Let $\mathcal{X} \subset \mathbb{R}^{n_x}$ be a compact set and consider the nonlinear system
\begin{subequations}\label{eq:system}
\begin{align}
  \dot{x} &= f(x), \quad x(0) = x_0 \label{eq:systema}\\
  y &= h(x), \label{eq:systemb}
\end{align}
\end{subequations}
where $f: \mathcal{X} \to \mathbb{R}^{n_x}$ and $h:\mathcal{X} \to \mathbb{R}^{n_y}$ are smooth and $x_0$ is unknown. 

KKL observer design seeks an \emph{injective} map $\mathcal{T}: \mathcal{X} \to \mathcal{Z} \subset\mathbb{R}^{n_z}$ such that, in the coordinates $z = \mathcal{T}(x)$, the transformed state satisfies
\begin{equation}\label{eq:z_system}
    \dot{z} = Az + Bh(x), \quad z(0) = \mathcal{T}(x_0),
\end{equation}
where $A \in \mathbb{R}^{n_z \times n_z}$ is chosen to be Hurwitz and $B \in \mathbb{R}^{n_z \times n_y}$ is chosen so that $(A, B)$ is controllable. Differentiating $z = \mathcal{T}(x)$ along trajectories of \eqref{eq:systema} shows that $\mathcal{T}$ must satisfy the PDE
\begin{equation}\label{eq:pde}
    \frac{\partial \mathcal{T}}{\partial x}(x)f(x) = A \mathcal{T}(x) + Bh(x), \quad \mathcal{T}(0) = 0.
\end{equation}

Since $\mathcal{T}$ is injective on $\mathcal{X}$, it admits a left-inverse $\mathcal{T}^*:\mathcal{Z} \to \mathcal{X}$, i.e., $\mathcal{T}^*(\mathcal{T}(x)) = x$ for all $x \in \mathcal{X}$. Then the KKL observer is given by
\begin{subequations}\label{eq:kkl_system}
\begin{align}
  \dot{\hat z} &= A \hat z + By, \quad \hat z(0) = \hat z_0 \label{eq:kkl_systema}\\
  \hat x &= \mathcal{T}^*(\hat z), \label{eq:kkl_systemb}
\end{align}
\end{subequations}
where $\hat z_0$ is a chosen initial condition independent of the unknown $x_0$.

\subsection{Existence Assumptions and Results}

We recall standard sufficient conditions ensuring the existence of a solution $\mathcal{T}$ to \eqref{eq:pde} that is uniformly injective on $\mathcal{X}$. Let $x(t;x_0)$ denote a state trajectory of some system with initial condition $x(0) = x_0$.

\begin{assumption}[Forward completeness on $\mathcal{X}$] \label{as:forward_complete}
    There exists a compact set $\mathcal{X} \subset \mathbb{R}^{n_x}$ such that system \eqref{eq:system} is forward complete in $\mathcal{X}$, i.e., for every $x_0 \in \mathcal{X}$, the solution $x(t;x_0)$ exists for all $t \in \mathbb{R}_{\ge 0}$ and remains in $\mathcal{X}$.
\end{assumption}

\begin{assumption}[$\mathcal{O}$-backward distinguishability] \label{as:backward_dist}
    There exists an open set $\mathcal{O}\subset\mathbb{R}^{n_x}$ with $\mathcal{X}\subset\mathcal{O}$ such that, for any distinct $x_0^1,x_0^2\in\mathcal{X}$, there exists $\tau < 0$ such that the corresponding backward solutions remain in $\mathcal{O}$ for all $t \in [\tau, 0]$ and their output trajectories differ on that interval, i.e., $h(x(s;x_0^1)) \neq h(x(s;x_0^2))$ for some $s\in[\tau,0]$.
\end{assumption}

Under Assumptions~\ref{as:forward_complete}--\ref{as:backward_dist}, one can guarantee the existence of uniformly injective solutions of the KKL PDE \eqref{eq:pde} for a generic choice of $(A,B)$. That is, there exists a class $\mathcal{K}$ function $\rho$ such that
\begin{equation}\label{eq:class_k_1}
    \|x-\hat x\|\le \rho\bigl(\|\mathcal{T}(x)-\mathcal{T}(\hat x)\|\bigr)
    \qquad \forall x,\hat x\in\mathcal{X}.
\end{equation}
This is proven in \cite{brivadis_further_2023} and is stated in the following theorem.

\begin{theorem}[Brivadis et al.~\cite{brivadis_further_2023}.]
\label{thm:brivadis}
Suppose Assumptions~\ref{as:forward_complete} and \ref{as:backward_dist} hold. For almost any $(A,B)\in\mathbb{R}^{n_z\times n_z}\times\mathbb{R}^{n_z\times n_y}$ with $n_z = n_y(2n_x+1)$, $A$ Hurwitz, and $(A,B)$ controllable, there exists a uniformly injective map $\mathcal{T}:\mathcal{X}\to\mathcal{Z}$ satisfying \eqref{eq:pde}.
\end{theorem}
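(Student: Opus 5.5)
The plan follows the Andrieu--Praly construction, with Brivadis et al.'s genericity argument for injectivity. The first step is an explicit candidate for $\mathcal{T}$. Since $A$ is Hurwitz, I define
\begin{equation*}
\mathcal{T}(x_0)=\int_{-\infty}^{0} e^{-As}B\,h\bigl(\breve{x}(s;x_0)\bigr)\,ds ,
\end{equation*}
where $\breve{x}$ is the backward solution of a modified vector field $\breve f=\chi f$. Here $\chi$ is a smooth cutoff equal to $1$ on a neighborhood of $\mathcal{X}$ (inside $\mathcal{O}$) and vanishing far away. This makes backward solutions complete and bounded, so $h\circ\breve x$ is bounded, and $\mathcal{T}$ is well defined and continuous. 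It is $C^1$ once $A$ is shifted so that its eigenvalues have real part more negative than the Lipschitz growth rate of $\breve f$. Assumption~\ref{as:forward_complete} keeps trajectories starting in $\mathcal{X}$ inside the region where $\chi=1$, so the PDE is verified on $\mathcal{X}$ in the usual way. Differentiate $\mathcal{T}(\breve x(t;x_0))$ in $t$ using the semigroup property, then integrate by parts: this gives $L_f\mathcal{T}=A\mathcal{T}+Bh$. The normalization $\mathcal{T}(0)=0$ holds when $0$ is an equilibrium with $h(0)=0$.

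The second step is injectivity for generic $(A,B)$. Take $n_z=n_y(2n_x+1)$ and reduce to $A$ diagonalizable with complex eigenvalues $\lambda_i$ (block-diagonal, $B$ a stack of identity blocks). Then $\mathcal{T}$ splits into $2n_x+1$ blocks $T_{\lambda_i}(x)=\int_{-\infty}^0 e^{-\lambda_i s}h(\breve x(s;x))\,ds$. Up to the choice of cutoff, these are Laplace transforms of the backward output trajectory. Backward distinguishability (Assumption~\ref{as:backward_dist}) says that for $x_1\neq x_2$ in $\mathcal{X}$ the outputs differ on some interval $[\tau,0]$. By analyticity of the Laplace transform in $\lambda$, the difference function $\lambda\mapsto T_\lambda(x_1)-T_\lambda(x_2)$ is then a nonzero holomorphic function on an open half-plane $\Omega$, so its zero set is discrete.

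The core is a Coron-type parametric transversality lemma. Let $\Upsilon=\{(x_1,x_2)\in\mathcal{X}^2: x_1\neq x_2\}$, of dimension $2n_x$, and consider $(x_1,x_2,\lambda_1,\dots,\lambda_{2n_x+1})\mapsto(T_{\lambda_i}(x_1)-T_{\lambda_i}(x_2))_i$. Each component is nonvanishing in $\lambda_i$ for fixed $(x_1,x_2)$. A dimension count ($2n_x$ parameters against $2n_x+1$ complex, hence at least $2n_x+1$ real, independent equations) shows that the set of $(\lambda_i)\in\Omega^{2n_x+1}$ admitting a common zero has Lebesgue measure zero. This is the main obstacle. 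The maps are only continuous, or $C^1$ in $x$, on a possibly non-manifold $\mathcal{X}$, so a Sard argument needs care. I would follow Brivadis et al. and use holomorphy in $\lambda$ together with a Hausdorff-dimension/Lipschitz estimate in $x$, or a Weierstrass-preparation local factorization, to bound the dimension of the bad set. Transferring the result from diagonal $A$ to almost every $(A,B)$ with $(A,B)$ controllable uses the fact that diagonalizable matrices with distinct eigenvalues are generic, together with invariance of injectivity under linear changes of coordinates $z\mapsto Pz$.

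Finally, injectivity on compact $\mathcal{X}$ with continuous $\mathcal{T}$ upgrades to uniform injectivity. Define $\rho_0(r)=\max\{\|x-\hat x\|:\|\mathcal{T}(x)-\mathcal{T}(\hat x)\|\le r\}$. It is nondecreasing and satisfies $\rho_0(0)=0$. A compactness argument gives $\rho_0(r)\to0$ as $r\to0$. Dominating $\rho_0$ by a class $\mathcal{K}$ function $\rho$ then yields \eqref{eq:class_k_1}.
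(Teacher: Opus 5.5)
The paper gives no proof of this statement; it imports it from Brivadis et al.~\cite{brivadis_further_2023}, so the comparison is with that literature. Your architecture matches it: the Andrieu--Praly integral with a cut-off field, a Coron-type genericity lemma, and compactness for uniform injectivity. The final compactness step is fine. The genuine gap is in the genericity bookkeeping, in two places.

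\emph{Real versus complex eigenvalues.} A real matrix $A=\operatorname{diag}(\lambda_1 I_{n_y},\dots,\lambda_{2n_x+1} I_{n_y})\in\mathbb{R}^{n_z\times n_z}$ must have real $\lambda_i$. If you let the $\lambda_i$ range independently over a complex half-plane, $\mathcal{T}$ takes values in $\mathbb{C}^{n_z}\cong\mathbb{R}^{2n_z}$. In that setting $n_x+1$ complex eigenvalues already suffice, as in \cite{andrieu_existence_2006}. More importantly, a null set in $\Omega^{2n_x+1}\subset\mathbb{C}^{2n_x+1}$ says nothing about the tuples admissible for real $A$ (real entries or conjugate pairs). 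Those form a real $(2n_x+1)$-dimensional set that is itself null in $\mathbb{C}^{2n_x+1}$. The lemma must therefore be run on that set. Handle real eigenvalues via real-analyticity of the Laplace transform on a half-line. Treat each conjugate pair as one complex parameter carrying one complex equation while occupying two slots. The count is then $\dim\Upsilon=2n_x$ against $r+2c=2n_x+1$ real parameters, with exactly one dimension to spare; this is precisely why $n_z=n_y(2n_x+1)$ suffices.

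\emph{Multiple outputs.} For $n_y\ge 2$, every eigenvalue in your reference family has multiplicity $n_y$, so the family is a null set of matrices. A generic $A$ has simple spectrum and is not similar to any member of it, so invariance under $z\mapsto Pz$ does not reach almost every $(A,B)$. After diagonalising a generic pair, mode $i$ only sees the scalar $b_i^\top\bigl(T_{\lambda_i}(x_1)-T_{\lambda_i}(x_2)\bigr)$, where $b_i^\top$ is the $i$-th row of $M^{-1}B$ (nonzero by controllability). The fibre over a fixed $(x_1,x_2)$ is then a hypersurface in $(\lambda_i,b_i)$, not a discrete set. You need a version of the lemma with parameters $(\lambda_i,b_i)$, in which each real coordinate of $z$ contributes codimension one and the inequality actually used is $n_z>2n_x$. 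For $n_y=1$ your similarity transfer does work, once the real/complex point is fixed.

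Two smaller points. First, you cannot ``shift'' a given $A$. Both the classical $C^1$ solution of \eqref{eq:pde} and the Lipschitz-in-$x$ estimate your dimension argument needs are available only for $\sigma(A)\subset\{\operatorname{Re}\lambda<-\rho\}$, which is how the injectivity result of \cite{andrieu_existence_2006} is formulated. For slower Hurwitz $A$, your construction yields only a continuous $\mathcal{T}$ satisfying \eqref{eq:pde} as a Lie derivative along $f$, and no dimension estimate. Second, the cut-off must satisfy $\chi\equiv 1$ on $\mathcal{O}$ (taken bounded), not merely on a neighbourhood of $\mathcal{X}$. The distinguishing backward arcs of Assumption~\ref{as:backward_dist} leave $\mathcal{X}$, and outputs along $\breve f$ agree with the true ones only where $\chi=1$. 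Otherwise the nonvanishing of $\lambda\mapsto T_\lambda(x_1)-T_\lambda(x_2)$ is not justified.
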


\section{Computable Estimation Error Certificate} \label{sec:certificate}

Suppose that we have a uniformly injective solution $\mathcal{T}$ to the PDE \eqref{eq:pde} with a corresponding left-inverse $\mathcal{T}^*$. Let $z(t) = \mathcal{T}(x(t))$ and consider the observer \eqref{eq:kkl_systema} driven by the output $y = h(x(t))$. Define the (ideal) observer-coordinate error $\tilde{z}(t) := \hat z(t) - z(t)$. It is straightforward to verify that $\dot{\tilde{z}} = A\tilde{z}$, and since $A$ is Hurwitz, $\|\tilde{z}(t)\| \to 0$ as $t \to \infty$. Uniform injectivity of $\mathcal{T}$ then implies $\|\hat x(t)-x(t)\| \le \rho(\|\hat z(t)-z(t)\|)$, so $\|\hat x(t)-x(t)\|\to 0$ as $\|\hat z(t)-z(t)\|\to 0$. 

However, when using a learning-based KKL observer, $\mathcal{T}$ and $\mathcal{T}^*$ are replaced by learned parametric approximations $\hat{\mathcal{T}}_{\theta}$ and $\hat{\mathcal{T}}^*_\eta$. In this case, convergence of the estimation error to zero is no longer guaranteed, and we instead seek an explicit bound that holds uniformly on a prescribed region. To quantify the mismatch between $\hat{\mathcal{T}}_\theta$ and a true solution to \eqref{eq:pde}, define the PDE residual
\begin{equation}\label{eq:pde_residual}
    \mathcal{R}_\theta(x) : = \frac{\partial \hat{\mathcal{T}}_\theta}{\partial x}(x) f(x) - A \hat{\mathcal{T}}_\theta(x) - Bh(x).
\end{equation}
We analyze the learned observer driven by $y=h(x(t))$ through the modified coordinate error 
\[ 
e_z(t):=\hat z(t)-\hat{\mathcal{T}}_\theta(x(t)), 
\] 
leading to a Lyapunov-based bound on the estimation error.

\subsection{Observer-Coordinate Error Bound}

\begin{proposition}\label{prop:ez_bound}
Let $Q\succ 0$ be given and let $P\succ 0$ solve the Lyapunov equation $PA + A^\top P = -Q$. Define
\[
\bar{\mathcal{R}}:=\sup_{x\in\mathcal{X}}\|\mathcal{R}_\theta(x)\|.
\]
Then
\begin{equation}\label{eq:prop_ez_ultimate}
\limsup_{t\to\infty}\|e_z(t)\|
\le
\sqrt{\frac{4\,\lambda_{\max}(P)}{\lambda_{\min}(Q)\lambda_{\min}(P)}}\,\|Q^{-1/2}P\|\,\bar{\mathcal{R}}.
\end{equation}
\end{proposition}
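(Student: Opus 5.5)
The plan is a standard Lyapunov/ISS argument applied to the error dynamics of $e_z$, with the PDE residual treated as a bounded disturbance. First I compute the dynamics of $e_z$. Along trajectories of \eqref{eq:systema} with $y=h(x(t))$, using \eqref{eq:kkl_systema} and the definition \eqref{eq:pde_residual}, I expect
\[
\dot e_z = A\hat z + Bh(x) - \frac{\partial \hat{\mathcal{T}}_\theta}{\partial x}(x)f(x)
= A\hat z + Bh(x) - \bigl(A\hat{\mathcal{T}}_\theta(x) + Bh(x) + \mathcal{R}_\theta(x)\bigr)
= A e_z - \mathcal{R}_\theta(x(t)).
\]
This step needs $\hat{\mathcal{T}}_\theta$ to be continuously differentiable, which holds for a PINN with smooth activations. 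By Assumption~\ref{as:forward_complete}, $x(t)\in\mathcal{X}$ for all $t\ge 0$, so $\|\mathcal{R}_\theta(x(t))\|\le\bar{\mathcal{R}}$ uniformly in time.

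Next I take $V(e_z)=e_z^\top P e_z$. Using $PA+A^\top P=-Q$,
\[
\dot V = -e_z^\top Q e_z - 2e_z^\top P\mathcal{R}_\theta .
\]
The key step, which is where the constant $\|Q^{-1/2}P\|$ enters, is to bound the cross term by writing $e_z^\top P\mathcal{R}_\theta = (Q^{1/2}e_z)^\top (Q^{-1/2}P\mathcal{R}_\theta)$. Young's inequality $2ab\le \tfrac12 a^2+2b^2$ then gives
\[
2|e_z^\top P\mathcal{R}_\theta| \le \tfrac12 e_z^\top Q e_z + 2\|Q^{-1/2}P\|^2\bar{\mathcal{R}}^2 .
\]
Hence
\[
\dot V\le -\tfrac12 e_z^\top Q e_z + 2\|Q^{-1/2}P\|^2\bar{\mathcal{R}}^2 .
\]
Since $e_z^\top Q e_z\ge \lambda_{\min}(Q)\|e_z\|^2\ge \frac{\lambda_{\min}(Q)}{\lambda_{\max}(P)}V$, this becomes
\[
\dot V\le -\alpha V + c, \qquad \alpha=\frac{\lambda_{\min}(Q)}{2\lambda_{\max}(P)},\quad c=2\|Q^{-1/2}P\|^2\bar{\mathcal{R}}^2 .
\]

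Finally I apply the comparison lemma (Grönwall) to get
\[
V(t)\le e^{-\alpha t}V(0)+\frac{c}{\alpha}\bigl(1-e^{-\alpha t}\bigr),
\]
so that
\[
\limsup_{t\to\infty} V\le \frac{c}{\alpha}=\frac{4\lambda_{\max}(P)}{\lambda_{\min}(Q)}\|Q^{-1/2}P\|^2\bar{\mathcal{R}}^2 .
\]
Combining this with $\|e_z\|^2\le V/\lambda_{\min}(P)$ and taking square roots gives exactly \eqref{eq:prop_ez_ultimate}.

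The only delicate point is choosing the Young's-inequality weighting so that the constant matches the stated one. Splitting off half of the $Q$-term, as above, produces the factor $4$. A secondary point to state explicitly is that $\bar{\mathcal{R}}<\infty$: this follows from the compactness of $\mathcal{X}$ and the continuity of $\mathcal{R}_\theta$. Trajectories must also remain in $\mathcal{X}$, which is guaranteed by Assumption~\ref{as:forward_complete}.
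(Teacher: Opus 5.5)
Your proof is correct and follows the paper's argument essentially step for step. You use the same error dynamics $\dot e_z = Ae_z - \mathcal{R}_\theta(x)$, the same Lyapunov function $V=e_z^\top P e_z$, the same Young's-inequality splitting through $Q^{1/2}e_z$ and $Q^{-1/2}P\mathcal{R}_\theta$, and the same comparison argument, differing only in that the paper carries a free weight $\varepsilon\in(0,1)$ and then takes $\varepsilon=1/2$ (the maximizer of $\varepsilon(1-\varepsilon)$), whereas you fix $\varepsilon=1/2$ from the start and obtain the identical constant.
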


\begin{proof}
    Differentiating $e_z = \hat{z} - \hat{\mathcal{T}}_\theta(x)$ along trajectories and using \eqref{eq:kkl_systema} and \eqref{eq:systema} gives
    \begin{equation*}
        \begin{aligned}
                \dot{e}_z &= \dot{\hat{z}} - \frac{\partial \hat{\mathcal{T}}_\theta}{\partial x}(x) f(x) \\
        &= A \hat{z} + B h(x) - \frac{\partial \hat{\mathcal{T}}_\theta}{\partial x}(x) f(x) \\
        &= A (\hat{z} - \hat{\mathcal{T}}_\theta(x)) + \biggl(A \hat{\mathcal{T}}_\theta(x) + B h(x) - \frac{\partial \hat{\mathcal{T}}_\theta}{\partial x}(x) f(x) \biggr) \\
        &= A e_z - \mathcal{R}_\theta(x).
        \end{aligned}
    \end{equation*}
    Define the Lyapunov function $V(t) := e_z(t)^\top Pe_z(t)$ with $PA + A^\top P = -Q$. Then
    \begin{equation*}
        \begin{aligned}
            \dot{V}
            &=2e_z^{\top}P\dot{e_z}\\
            &=2e_z^{\top}P(Ae_z - \mathcal{R}_\theta(x))\\
            & = e_z^{\top}[P A + A^{\top} P]e_z - 2 e_z^{\top} P \mathcal{R}_\theta(x) \\
            & \leq -e_z^\top Q e_z - 2e_z^\top P\mathcal{R}_\theta(x).
        \end{aligned}
    \end{equation*}
    By Young's inequality, for any $\varepsilon\in(0,1)$,
    \[
    2|e_z^\top P\mathcal{R}_\theta(x)| \leq \varepsilon\,e_z^\top Q e_z +\frac{1}{\varepsilon}\|Q^{-1/2}P\mathcal{R}_\theta(x)\|^2.
    \]
    Hence,
    \[
    \dot V \leq -(1-\varepsilon)e_z^\top Q e_z +\frac{1}{\varepsilon}\|Q^{-1/2}P\mathcal{R}_\theta(x)\|^2.
    \]
    Let $\lambda_{\min}(M)$ and $\lambda_{\max}(M)$ denote, respectively, the smallest and largest absolute values of the real parts of the eigenvalues of a matrix $M\in\mathbb{R}^{n\times n}$. Using $e_z^\top Q e_z \ge \lambda_{\min}(Q)\|e_z\|^2$ and $V(t)\le \lambda_{\max}(P)\|e_z\|^2$ yields
    \begin{align*}
        & \dot V \le -c\,V + \frac{1}{\varepsilon}\bigl\|Q^{-1/2}P\mathcal{R}_\theta(x(t))\bigr\|^2,\\
        &c := (1-\varepsilon)\frac{\lambda_{\min}(Q)}{\lambda_{\max}(P)}.
    \end{align*}
    Since $x(t) \in \mathcal{X}$ for all $t \geq 0$, by definition of $\bar{\mathcal{R}}$,
    \[
    \dot V \leq -cV + \frac{\|Q^{-1/2}P\|^2}{\varepsilon}\bar{\mathcal{R}}^2.
    \]
    Solving this differential inequality gives
    \begin{equation*}
    \begin{aligned}
    V(t)
    &\le V(0)e^{-ct} + \frac{\|Q^{-1/2}P\|^2\bar{\mathcal{R}}^2}{\varepsilon c}\bigl(1 - e^{-ct}\bigr) \\
    &\le V(0)e^{-ct} + \frac{\|Q^{-1/2}P\|^2\bar{\mathcal{R}}^2}{\varepsilon c},
    \end{aligned}
    \end{equation*}
    where $V(0)=e_z(0)^\top P e_z(0)$. Finally, using $V(t) \geq \lambda_{\min}(P)\|e_z\|^2$ yields
    \begin{equation}\label{eq:prop_ez_bound}
        \|e_z(t)\| \le
        \sqrt{\frac{V(0)}{\lambda_{\min}(P)}}\,e^{-ct/2} +
        \sqrt{\frac{1}{\varepsilon c\,\lambda_{\min}(P)}}\,\|Q^{-1/2}P\|\,\bar{\mathcal{R}}.
    \end{equation}
    In \eqref{eq:prop_ez_bound}, the constant multiplying $\bar{\mathcal{R}}$ contains the factor $\bigl(\varepsilon(1-\varepsilon)\bigr)^{-1/2}$, which is minimized at $\varepsilon = 1/2$. Setting $\varepsilon = 1/2$ and taking $\limsup$ yields \eqref{eq:prop_ez_ultimate}.
\end{proof}

\begin{remark} 
The analysis above also provides a transient bound on $\|e_z(t)\|$ through \eqref{eq:prop_ez_bound}. We focus on the ultimate bound \eqref{eq:prop_ez_ultimate}, which characterizes the steady-state estimation error after the observer has converged. While the Lyapunov analysis of ultimate boundedness is standard, the main novelty here is its use to obtain computable error bounds for neural KKL observers, as detailed in the next subsection. 
\end{remark}

\subsection{Error Bound on State Estimation}

We now translate the observer-coordinate error bound into a bound on the state-estimation error. Let $\hat x(t)=\hat{\mathcal{T}}_\eta^*(\hat z(t))$ and assume that $\hat{\mathcal{T}}_\eta^*$ is Lipschitz on $\hat{\mathcal{T}}_\theta(\mathcal{X})$ with constant
\[
\mathcal{L}_\eta
:=
\sup_{\substack{z_1,z_2\in\hat{\mathcal{T}}_\theta(\mathcal{X})\\ z_1\neq z_2}}
\frac{\|\hat{\mathcal{T}}_\eta^*(z_1)-\hat{\mathcal{T}}_\eta^*(z_2)\|}{\|z_1-z_2\|}.
\]
Define the worst-case reconstruction error of the learned pair $\bigl(\hat{\mathcal{T}}_\theta, \hat{\mathcal{T}}_\eta^*\bigr)$ on $\mathcal{X}$ by
\[
\mathcal{E}_{\hat{\mathcal{T}}} := \sup_{x\in\mathcal{X}} \|\hat{\mathcal{T}}_\eta^*(\hat{\mathcal{T}}_\theta(x)) - x\|.
\]

\begin{proposition}\label{prop:x_bound}
Under the above assumptions,
\begin{equation}\label{eq:prop_x_bound}
\|\hat x(t)-x(t)\| \le \mathcal{L}_\eta\|e_z(t)\| + \mathcal{E}_{\hat{\mathcal{T}}}.
\end{equation}
Moreover, combining \eqref{eq:prop_x_bound} with Proposition~\ref{prop:ez_bound} yields 
\begin{equation}\label{eq:prop_x_ultimate}
    \begin{aligned}
        \limsup_{t\to\infty}\|\hat x(t)-x(t)\| & \leq \mathcal{L}_\eta\sqrt{\frac{4\,\lambda_{\max}(P)}{\lambda_{\min}(Q)\lambda_{\min}(P)}}\\
        & \times \|Q^{-1/2}P\|\,\bar{\mathcal{R}} + \mathcal{E}_{\hat{\mathcal{T}}}.
    \end{aligned}
\end{equation}
\end{proposition}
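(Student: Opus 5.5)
The plan is to insert the intermediate point $\hat{\mathcal{T}}_\eta^*(\hat{\mathcal{T}}_\theta(x(t)))$ and apply the triangle inequality:
\[
\|\hat x(t)-x(t)\| \le \bigl\|\hat{\mathcal{T}}_\eta^*(\hat z(t))-\hat{\mathcal{T}}_\eta^*(\hat{\mathcal{T}}_\theta(x(t)))\bigr\| + \bigl\|\hat{\mathcal{T}}_\eta^*(\hat{\mathcal{T}}_\theta(x(t)))-x(t)\bigr\|.
\]
The second term is bounded by $\mathcal{E}_{\hat{\mathcal{T}}}$. This holds because $x(t)\in\mathcal{X}$ for all $t\ge 0$ by Assumption~\ref{as:forward_complete}, and $\mathcal{E}_{\hat{\mathcal{T}}}$ is the supremum of exactly this quantity over $\mathcal{X}$.

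The first term is where the Lipschitz constant enters. Since $e_z(t)=\hat z(t)-\hat{\mathcal{T}}_\theta(x(t))$, the Lipschitz bound would give $\mathcal{L}_\eta\|e_z(t)\|$ directly. This needs both arguments in the set where $\mathcal{L}_\eta$ is defined. The point $\hat{\mathcal{T}}_\theta(x(t))$ lies in $\hat{\mathcal{T}}_\theta(\mathcal{X})$, but nothing forces the observer state $\hat z(t)$ to lie there.

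This is the main obstacle. I would resolve it by reading ``$\hat{\mathcal{T}}_\eta^*$ is Lipschitz on $\hat{\mathcal{T}}_\theta(\mathcal{X})$ with constant $\mathcal{L}_\eta$'' in the stated assumptions as covering a region that contains the observer trajectory $\hat z(t)$. Concretely, this would be a compact neighbourhood of $\hat{\mathcal{T}}_\theta(\mathcal{X})$, or the certified box used in verification, since in practice $\mathcal{L}_\eta$ is certified over a box containing both. I would state this explicitly. Alternatively, one can note that for the asymptotic statement it suffices for the bound to hold eventually. By Proposition~\ref{prop:ez_bound}, $\hat z(t)$ eventually lies within the ultimate-bound distance of $\hat{\mathcal{T}}_\theta(\mathcal{X})$, so a Lipschitz constant over that enlarged set suffices.

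For the ultimate bound \eqref{eq:prop_x_ultimate}, take $\limsup_{t\to\infty}$ of both sides of \eqref{eq:prop_x_bound}. The constants $\mathcal{L}_\eta$ and $\mathcal{E}_{\hat{\mathcal{T}}}$ are independent of $t$, so the limsup passes through them. Substituting \eqref{eq:prop_ez_ultimate} from Proposition~\ref{prop:ez_bound} then yields the claimed expression.
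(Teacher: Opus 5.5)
Your proposal matches the paper's proof. Both use the triangle inequality through $\hat{\mathcal{T}}_\eta^*(\hat{\mathcal{T}}_\theta(x(t)))$, bound the reconstruction term by $\mathcal{E}_{\hat{\mathcal{T}}}$ using $x(t)\in\mathcal{X}$, and then take $\limsup$ and substitute Proposition~\ref{prop:ez_bound}. Your domain concern is well founded: the paper applies the Lipschitz bound to the pair $(\hat z(t),\hat{\mathcal{T}}_\theta(x(t)))$ without noting that $\hat z(t)$ generically lies off the lower-dimensional set $\hat{\mathcal{T}}_\theta(\mathcal{X})$, so your enlargement of the set on which $\mathcal{L}_\eta$ is certified is a genuinely needed strengthening of the hypothesis. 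That enlarged set could be a box containing the observer trajectory, or the eventual neighbourhood of $\hat{\mathcal{T}}_\theta(\mathcal{X})$ given by Proposition~\ref{prop:ez_bound}.
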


\begin{proof}
By the triangle inequality,
\begin{align*}
    \|\hat x(t)-x(t)\| &\leq \|\hat{\mathcal{T}}_\eta^*(\hat z(t))-\hat{\mathcal{T}}_\eta^*(\hat{\mathcal{T}}_\theta(x(t)))\|\\
    &+ \|\hat{\mathcal{T}}_\eta^*(\hat{\mathcal{T}}_\theta(x(t)))-x(t)\|.
\end{align*}
The first term is bounded by $\mathcal{L}_\eta\|\hat z(t)-\hat{\mathcal{T}}_\theta(x(t))\|=\mathcal{L}_\eta\|e_z(t)\|$, and the second term is bounded by $\mathcal{E}_{\hat{\mathcal{T}}}$ by definition. Substituting the bound on $\limsup_{t\to\infty}\|e_z(t)\|$ from Proposition~\ref{prop:ez_bound} yields \eqref{eq:prop_x_ultimate}.
\end{proof}

\begin{remark}
Several previous works develop state-estimation error bounds for learning-based KKL observers \cite{niazi_learning-based_2023, niazi_kkl_2025, zhao_robust_nodate}. In some analyses, the unknown left-inverse is written as $\mathcal{T}^*(z) = \hat{\mathcal{T}}^*_\eta(z) + \mathcal{E}^*(z)$, which leads to bounds that depend on $\epsilon^* = \sup_{z \in \mathcal{Z}}\|\mathcal{E}^*(z)\|$ (e.g., \cite{niazi_learning-based_2023}). In general, $\epsilon^*$ is not directly computable because it depends on the unknown $\mathcal{T}^*$, and learning-theoretic guarantees typically control expected errors under an induced distribution rather than a uniform worst-case error over a prescribed compact set \cite{niazi_kkl_2025}. Robustness analyses likewise proceed by assuming a priori bounds on auxiliary signals and approximation-error terms, which yields informative but not directly certifiable worst-case constants on a chosen domain \cite{zhao_robust_nodate}. Unlike bounds that depend on non-computable quantities, our derived bound depends only on the certifiable quantities $\bar{\mathcal{R}}$, $\mathcal{L}_\eta$, and $\mathcal{E}_{\hat{\mathcal{T}}}$.
\end{remark}

\begin{remark}\label{rem:P_Q_remark}
In \eqref{eq:prop_x_ultimate}, $\bar{\mathcal{R}}$ is multiplied by the square root of the factor
\[
\Gamma(P):=\frac{4\,\lambda_{\max}(P)}{\lambda_{\min}(Q)\lambda_{\min}(P)}\,\|Q^{-1/2}P\|^2.
\]
When $A = -\operatorname{diag}(\lambda_1, \dots, \lambda_{n_z})$ with $\lambda_i > 0$, $\Gamma(P) \geq 1/\lambda_{\min}^2(A)$ for every $P \succ 0$ such that $P A + A^\top P = -Q$, $Q \succ 0$. Moreover, equality is achieved by taking $P = cI$, which yields $Q = -2cA$. Therefore, when calculating the bound, we take $\Gamma(P) = 1/\lambda_{\min}^2(A)$.
\end{remark}

\subsection{Measurement Error}

We now extend the bound to the case of bounded additive measurement error
\[
y=h(x)+v(t),
\qquad
\|v(t)\|\le \bar v
\quad \forall t\ge0,
\]
where $v:\mathbb{R}_{\ge0}\to\mathbb{R}^{n_y}$ is an unknown error.

\begin{proposition}\label{prop:ez_bound_noisy}
Consider \eqref{eq:kkl_systema} driven by $y=h(x)+v(t)$, where $\|v(t)\|\le \bar v$ for all $t\ge 0$. Let $Q\succ 0$ and let $P\succ 0$ solve $PA+A^\top P=-Q$. Then
\begin{equation}\label{eq:prop_x_ultimate_noisy}
\begin{aligned}
    \limsup_{t\to\infty}\|\hat x(t)-x(t)\| \leq
    \mathcal{L}_\eta \sqrt{\frac{4\,\lambda_{\max}(P)}{\lambda_{\min}(Q)\lambda_{\min}(P)}}\\
    \times \left(\|Q^{-1/2}P\|\,\bar{\mathcal{R}}+\|Q^{-1/2}PB\|\,\bar v\right)
    +
    \mathcal{E}_{\hat{\mathcal{T}}}.
\end{aligned}
\end{equation}
\end{proposition}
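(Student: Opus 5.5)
We follow the proof of Proposition~\ref{prop:ez_bound} with the noisy output, and then apply the argument of Proposition~\ref{prop:x_bound}. We set $e_z(t):=\hat z(t)-\hat{\mathcal{T}}_\theta(x(t))$ as before.

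The observer is now $\dot{\hat z}=A\hat z+Bh(x)+Bv(t)$. Differentiating $e_z$ along trajectories as in Proposition~\ref{prop:ez_bound} gives
\[
\dot e_z = Ae_z - \mathcal{R}_\theta(x) + Bv(t) = Ae_z - w(t),
\]
where $w(t):=\mathcal{R}_\theta(x(t))-Bv(t)$. We take the same Lyapunov function $V=e_z^\top P e_z$. Then $\dot V\le -e_z^\top Qe_z-2e_z^\top P w$.

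We apply Young's inequality with $\varepsilon\in(0,1)$ exactly as before, with $\mathcal{R}_\theta$ replaced by $w$:
\[
2|e_z^\top Pw|\le \varepsilon\, e_z^\top Qe_z+\tfrac{1}{\varepsilon}\|Q^{-1/2}Pw\|^2.
\]
We then split the disturbance term using the triangle inequality and submultiplicativity:
\[
\|Q^{-1/2}Pw\|\le \|Q^{-1/2}P\|\,\|\mathcal{R}_\theta(x)\|+\|Q^{-1/2}PB\|\,\|v\|\le \|Q^{-1/2}P\|\bar{\mathcal{R}}+\|Q^{-1/2}PB\|\bar v=:\bar w.
\]
The first bound on the residual term uses forward completeness (Assumption~\ref{as:forward_complete}), since $x(t)\in\mathcal{X}$ for all $t\ge 0$ and hence $\|\mathcal{R}_\theta(x(t))\|\le\bar{\mathcal{R}}$. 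This yields
\[
\dot V\le -cV+\bar w^2/\varepsilon, \qquad c=(1-\varepsilon)\lambda_{\min}(Q)/\lambda_{\max}(P).
\]
The comparison argument then proceeds verbatim. It gives
\[
\|e_z(t)\|\le \sqrt{V(0)/\lambda_{\min}(P)}\,e^{-ct/2}+(\varepsilon c\lambda_{\min}(P))^{-1/2}\bar w.
\]
Choosing $\varepsilon=1/2$ and taking $\limsup$ gives
\[
\limsup_{t\to\infty}\|e_z(t)\|\le\sqrt{\frac{4\lambda_{\max}(P)}{\lambda_{\min}(Q)\lambda_{\min}(P)}}\,\bar w.
\]

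The state-error step is the triangle-inequality decomposition from Proposition~\ref{prop:x_bound}, which yields $\|\hat x-x\|\le\mathcal{L}_\eta\|e_z\|+\mathcal{E}_{\hat{\mathcal{T}}}$. Taking $\limsup$ then gives \eqref{eq:prop_x_ultimate_noisy}.

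The main obstacle is this last step. The Lipschitz constant $\mathcal{L}_\eta$ is only certified on $\hat{\mathcal{T}}_\theta(\mathcal{X})$, and with noise $\hat z(t)$ need not lie in that set. The same issue is implicit in Proposition~\ref{prop:x_bound}, and it is handled there by the standing assumption that $\hat{\mathcal{T}}_\eta^*$ is Lipschitz with constant $\mathcal{L}_\eta$ on the relevant region. We adopt the same assumption here, noting that $\mathcal{L}_\eta$ should be certified on a neighborhood of $\hat{\mathcal{T}}_\theta(\mathcal{X})$ large enough to contain the limiting observer states. Everything else is a routine modification of the noise-free proof.
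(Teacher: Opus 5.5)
Your argument is correct and yields exactly the constant in \eqref{eq:prop_x_ultimate_noisy}. It handles the disturbance differently from the paper.

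The paper keeps the cross terms $2e_z^\top P\mathcal{R}_\theta(x)$ and $2e_z^\top PBv(t)$ separate and applies Young's inequality to each with its own weight ($\varepsilon_{\mathcal{R}},\varepsilon_v>0$, $\varepsilon_{\mathcal{R}}+\varepsilon_v<1$). It then minimizes $(1-\varepsilon_{\mathcal{R}}-\varepsilon_v)^{-1}\bigl(a/\varepsilon_{\mathcal{R}}+b/\varepsilon_v\bigr)$, with $a=\|Q^{-1/2}P\|^2\bar{\mathcal{R}}^2$ and $b=\|Q^{-1/2}PB\|^2\bar v^2$. The optimum has $\varepsilon_{\mathcal{R}}^*+\varepsilon_v^*=1/2$, with the weights proportional to $\sqrt{a}$ and $\sqrt{b}$.

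You instead lump the two terms into $w=\mathcal{R}_\theta(x)-Bv$ and reuse the single-weight step of Proposition~\ref{prop:ez_bound} with $\varepsilon=1/2$. You then bound $\|Q^{-1/2}Pw\|$ by the triangle inequality. The two routes agree: at fixed total weight $s=\varepsilon_{\mathcal{R}}+\varepsilon_v$, Cauchy--Schwarz gives $\min\bigl(a/\varepsilon_{\mathcal{R}}+b/\varepsilon_v\bigr)=(\sqrt{a}+\sqrt{b})^2/s=\bar w^2/s$. So optimizing the split reproduces exactly your triangle-inequality bound.

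Your route is shorter and needs no optimization. It also avoids the degenerate cases $\bar v=0$ or $\bar{\mathcal{R}}=0$, where the paper's optimal weights leave the admissible set $\varepsilon_{\mathcal{R}},\varepsilon_v>0$. In this setting the two-weight form buys nothing extra.

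Your caveat on $\mathcal{L}_\eta$ is well taken, but the paper does not actually handle it. Its standing assumption is Lipschitz continuity on $\hat{\mathcal{T}}_\theta(\mathcal{X})$ only, and its proof silently inherits the step of Proposition~\ref{prop:x_bound}. Yet $\hat z(t)$ generally lies off $\hat{\mathcal{T}}_\theta(\mathcal{X})$, even without noise, since $e_z$ need not vanish.

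Your fix closes this gap: assume the Lipschitz bound on a tube around $\hat{\mathcal{T}}_\theta(\mathcal{X})$ whose radius is strictly larger than the ultimate bound on $\|e_z\|$. This suffices for the $\limsup$ claim, because $\hat z(t)$ eventually remains in that tube.
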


\begin{proof}
    With $y=h(x)+v(t)$, the error dynamics become
    \[
    \dot e_z = Ae_z - \mathcal{R}_\theta(x) + Bv(t).
    \]
    With $V=e_z^\top P e_z$ and $PA+A^\top P=-Q$,
    \[
    \dot V = -e_z^\top Q e_z - 2e_z^\top P\mathcal{R}_\theta(x) + 2e_z^\top PBv(t).
    \]
    Apply Young's inequality to each cross term with parameters $\varepsilon_{\mathcal{R}},\varepsilon_v>0$, $\varepsilon_{\mathcal{R}}+\varepsilon_v<1$:
    \[
    2|e_z^\top P\mathcal{R}_\theta(x)| \le \varepsilon_{\mathcal{R}} e_z^\top Q e_z + \frac{1}{\varepsilon_{\mathcal{R}}}\|Q^{-1/2}P\mathcal{R}_\theta(x)\|^2,
    \]
    \[
    2|e_z^\top PBv(t)| \le \varepsilon_v e_z^\top Q e_z + \frac{1}{\varepsilon_v}\|Q^{-1/2}PBv(t)\|^2.
    \]
    Substituting the above and following similar steps as Proposition~\ref{prop:ez_bound} yields 
    \begin{equation}\label{eq:ez_ultimate_twoeps}
    \begin{aligned}
       \limsup_{t\to\infty}\|e_z(t)\| &\leq \sqrt{\frac{1}{c\,\lambda_{\min}(P)}} \\
        & \times \sqrt{\frac{\|Q^{-1/2}P\|^2}{\varepsilon_{\mathcal{R}}}\bar{\mathcal{R}}^2 +
        \frac{\|Q^{-1/2}PB\|^2}{\varepsilon_v}\bar v^2},
    \end{aligned}
    \end{equation}
    \[
    c:=(1-\varepsilon_{\mathcal{R}}-\varepsilon_v)\frac{\lambda_{\min}(Q)}{\lambda_{\max}(P)}.
    \]
    
    The right-hand side of \eqref{eq:ez_ultimate_twoeps} remains to be optimized with respect to $\varepsilon_{\mathcal{R}}$ and $\varepsilon_v$. Let
    \begin{align*}
        a:=\|Q^{-1/2}P\|^2\bar{\mathcal{R}}^2, \qquad b:=\|Q^{-1/2}PB\|^2\bar v^2,
    \end{align*}
    and write the right-hand side as
    \[
    \sqrt{\frac{\lambda_{\max}(P)}{\lambda_{\min}(Q)\lambda_{\min}(P)}}\sqrt{\frac{1}{1-\varepsilon_{\mathcal{R}}-\varepsilon_v}\biggl(\frac{a}{\varepsilon_{\mathcal{R}}} +
        \frac{b}{\varepsilon_v}\biggr)}
    \]
    So, minimizing the bound with respect to $\varepsilon_{\mathcal{R}}$ and $\varepsilon_v$ is equivalent to the constrained optimization problem of minimizing
    \[
    \frac{1}{1-\varepsilon_{\mathcal{R}}-\varepsilon_v}\biggl(\frac{a}{\varepsilon_{\mathcal{R}}} + \frac{b}{\varepsilon_v}\biggr)
    \]
    with $\varepsilon_{\mathcal{R}},\varepsilon_v>0$ and $\varepsilon_{\mathcal{R}}+\varepsilon_v<1$. This is straightforward to do, and we find that $\varepsilon_{\mathcal{R}}^*+\varepsilon_v^* = 1/2$  and
    \[
    \varepsilon_{\mathcal{R}}^* = \frac{1}{2}\frac{\sqrt{a}}{\sqrt{a} + \sqrt{b}}, \quad \varepsilon_{v}^* = \frac{1}{2}\frac{\sqrt{b}}{\sqrt{a} + \sqrt{b}}, 
    \]
    
    Substituting $\varepsilon_{\mathcal{R}}^*$, $\varepsilon_{v}^*$, $a$ and $b$ into  \eqref{eq:ez_ultimate_twoeps} gives \eqref{eq:prop_x_ultimate_noisy}.
\end{proof}

\section{Experiments and Implementation Details}

We evaluate the computable estimation error bounds derived in Section~\ref{sec:certificate} on the following examples:
\begin{itemize}
    \item \textit{Reverse Duffing oscillator} with $x \in \mathbb{R}^2$:
    \begin{equation}
        \dot x_1 = x_2^3, \quad \dot x_2 = -x_1, \quad y = x_1.
    \end{equation}
    \item \textit{Van der Pol oscillator} with $x \in \mathbb{R}^2$ and $\mu = 1$:
    \begin{equation}
        \dot x_1 = x_2, \quad \dot x_2 = \mu(1 - x_1^2)x_2 - x_1, \quad y = x_1.
    \end{equation}
\end{itemize}

\subsection{Learning-Based KKL Observer Training Procedure}\label{sec:learning}

\begin{figure*}[t]
    \centering
    \includegraphics[width=\textwidth]{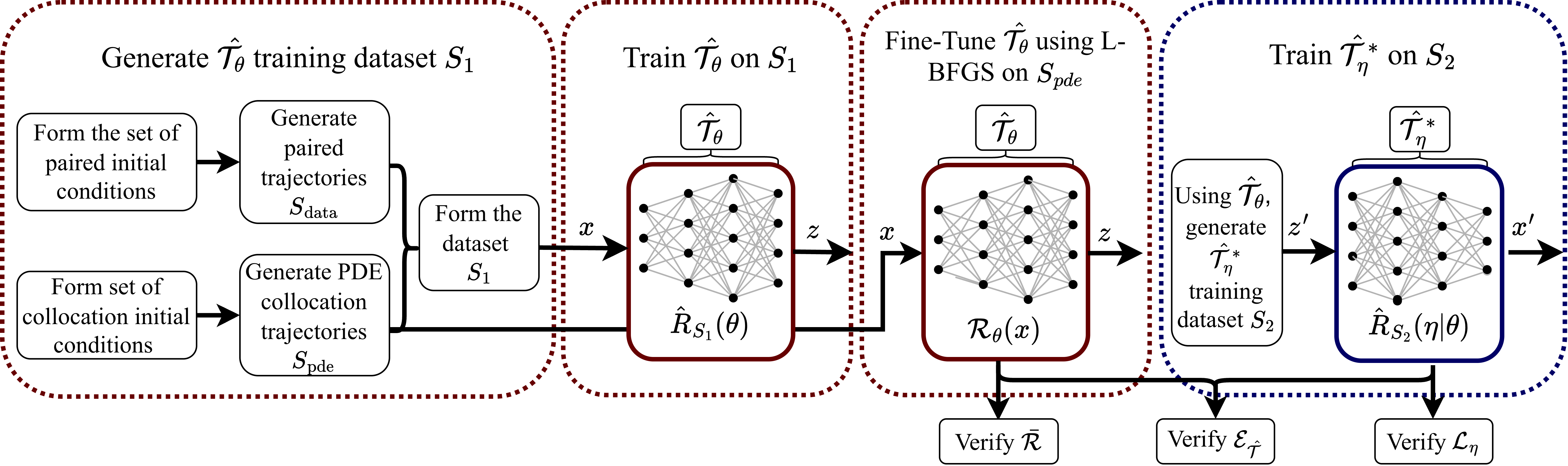}
    \caption{Overview of the learning-based KKL training and verification pipeline with L-BFGS fine-tuning.}
    \label{fig:training_procedure}
\end{figure*}

Our learning procedure follows the physics-informed framework of \cite{niazi_kkl_2025}; see also \cite{ramos_numerical_2020, niazi_learning-based_2023} for additional background. We learn approximations $\hat{\mathcal{T}}_{\theta}$ of the map $\mathcal{T}$ and $\hat{\mathcal{T}}_\eta^*$ of $\mathcal{T}$. Informally, $\hat{\mathcal{T}}_\theta$ is trained so that state trajectories of \eqref{eq:system} are mapped to trajectories of the observer coordinate dynamics \eqref{eq:kkl_systema}, and $\hat{\mathcal{T}}_\eta^*$ is trained to map observer coordinates back to state coordinates. 

\subsubsection{Backwards-integral initialization}

A practical difficulty is initializing \eqref{eq:kkl_system}: since $\mathcal{T}$ is unknown, we cannot set $z(0) = \mathcal{T}(x(0))$ and instead must approximate $z(0)$. Our strategy is motivated by the integral representation of $\mathcal{T}$ derived in \cite{andrieu_existence_2006}, which we approximate by a finite-horizon truncation. For $T_b>0$, define
{\setlength{\abovedisplayskip}{6pt}
\setlength{\belowdisplayskip}{6pt}
\begin{equation}\label{eq:TT_def}
    \mathcal{T}_{T_b}(x_0) :=
    \int_{-T_b}^{0} \exp(A\tau)\,B\,h\!\bigl(\breve x(\tau;x_0)\bigr)\,d\tau.
\end{equation}
}
where $\breve{x}(\tau;x)$ denotes the backward solution of \eqref{eq:system} satisfying $\breve x(0;x)=x$ for $\tau\le 0$, and set $z_0=\mathcal{T}_{T_b}(x_0)$. In practice, \eqref{eq:TT_def} is evaluated numerically by simulating \eqref{eq:system} backward on $[-T_b,0]$ and applying a trapezoidal quadrature rule. In our experiments, we set $T_b = 20$. This differs from the truncation-based initialization used in \cite{ramos_numerical_2020,niazi_learning-based_2023,niazi_kkl_2025}. 

\subsubsection{Forward Training}

We sample initial conditions $\{x_0^i\}_{i=1}^p \subset \mathcal{X}$ and compute $z_0^i=\mathcal{T}_{T_b}(x_0^i)$. We then simulate \eqref{eq:system} and \eqref{eq:z_system} forward on a horizon $[0,50]$ to collect paired samples $(x(t_k;x_0^i),z(t_k;z_0^i))$, forming the dataset $S_{\mathrm{data}}:=\{(x(t_k;x_0^i),z(t_k;z_0^i))\}$ with cardinality $N_{\mathrm{data}}$. Separately, PDE collocation points $S_{\mathrm{pde}}:=\{x(t_k;\tilde x_0^j)\}$ are collected from forward simulations of \eqref{eq:system}, with cardinality $N_{\mathrm{pde}}$. The forward-map parameters $\theta$ are trained using $S_1=\{S_{\mathrm{data}},S_{\mathrm{pde}}\}$ by minimizing the empirical physics informed loss
\begin{equation}\label{eq:empirical_R1}
\begin{aligned}
    \hat R_{S_1}(\theta) = \frac{1}{N_{\text{data}}}\sum_{(x,z)\in S_{\text{data}}}\|z- \hat{\mathcal{T}}_\theta(x)\|^2 +\\ \frac{\nu}{N_{\text{pde}}}\sum_{x\in S_{\text{pde}}}\|\mathcal{R}_\theta(x)\|^2,
\end{aligned}
\end{equation}
where the hyperparameter $\nu>0$ weighs the PDE penalty.

In our experiments, for reverse Duffing we sampled $p=1{,}000$ initial conditions uniformly from
$\mathcal{X}_0=[-3,3]^2$ and used the forward training method to generate the dataset. For Van der Pol we sample $p=100{,}000$ initial conditions from $[-2.1, 2.1]\times [-2.7, 2.7]$ (i.e. the smallest box that bounds the limit cycle). We set $n_z=5$, choose $B=\mathbf{1}_5$, and take $A=\mathrm{diag}(-1,-2,-3,-4,-5)$ for reverse Duffing and $A=\mathrm{diag}(-2,-4,-6,-8,-10)$ for Van der Pol. We optimize \eqref{eq:empirical_R1} using Adam for 15 epochs with $\tanh$ activations Table~\ref{tab:hyperparams_prelim} summarizes the training hyperparameters. 

\subsubsection{Fine-tuning}
After initial training, we fine-tune the forward model using the L-BFGS algorithm. Each fine-tuning round $i$ uses hard-point mining over $\mathcal{X}$ to select a set of collocation points $S_{\mathrm{pde}}^{i}\subset\mathcal{X}$ with cardinality $N_{\mathrm{pde}}^{i}$. For each fine-tuning round $i$, we minimize the empirical PDE-residual loss over mined points:
\begin{equation}\label{eq:pde_residual_loss}
    \hat R_{\mathrm{pde}}^{\,i}(\theta) := \frac{1}{N_{\mathrm{pde}}^{i}}\sum_{x\in S_{\mathrm{pde}}^{i}}\|\mathcal{R}_\theta(x)\|.
\end{equation}
For Van der Pol, the largest PDE residual frequently occurs near the limit cycle. To reduce the certified worst-case residual $\bar{\mathcal{R}}$, we over-sample both data points and collocation points near the boundary of the retained region.

\subsubsection{Inverse training}
After training $\hat{\mathcal{T}}_\theta$, we freeze $\theta$ and form the left-inverse training set $S_2=\{(z'_j,x'_j)\}_{j=1}^{N_2}$ by sampling $x'_j\in\mathcal{X}$ (or reusing trajectory states) and setting $z'_j=\hat{\mathcal{T}}_\theta(x'_j)$. The left-inverse map $\hat{\mathcal{T}}_\eta^*\in\mathcal{H}_\eta$ is trained on $S_2$ to minimize the empirical reconstruction loss
\begin{equation}\label{eq:empirical_R2}
    \hat R_{S_2}(\eta\mid\theta) = \frac{1}{N_2}\sum_{j=1}^{N_2}\|x'_j-\hat{\mathcal{T}}_\eta^*(z'_j)\|^2.
\end{equation}
This sequential learning procedure follows the approach proposed in \cite{niazi_learning-based_2023,niazi_kkl_2025} and is summarized in Fig.~\ref{fig:training_procedure}.

\begin{table}
\caption{Hyperparameters used for training the encoder $\hat{\mathcal{T}}_\theta$.}
\label{tab:hyperparams_prelim}
\centering
\small
\renewcommand{\arraystretch}{1.15}
\begin{tabular}{c|c c c c}\hline
System &
\makecell{Hidden\\layers} &
\makecell{Layer\\size} &
\makecell{Learning\\rate} &
\makecell{Physics\\weight $\nu$} \\
\hline
Reverse Duffing & 8 & 100 & $10^{-3}$ & 1 \\
Van der Pol     & 7  & 128 & $10^{-3}$ & 1 \\
\hline
\end{tabular}
\end{table}

\subsection{Verification Regions and Certified Quantities}

The neural-network quantities entering \eqref{eq:prop_x_ultimate} and \eqref{eq:prop_x_ultimate_noisy} are certified using $\alpha,\beta$-CROWN (see, e.g., \cite{xu_automatic_2020, zhang_efficient_2018, xu_fast_2021, wang_beta-crown_2021, zhang_branch_nodate}). Specifically, we certify the following three terms:
\begin{enumerate}[(i)]
    \item the worst-case PDE residual $\bar{\mathcal{R}} := \sup_{x \in \mathcal{D}}\|\mathcal{R}_\theta(x)\|$,
    \item a Lipschitz bound for the learned left-inverse map $\hat{\mathcal{T}}_\eta^*$ over $\hat{\mathcal{T}}_\theta(\mathcal{X})$, denoted $\mathcal{L}_\eta$,
    \item the worst-case reconstruction error $\mathcal{E}_{\hat{\mathcal{T}}} := \sup_{x\in\mathcal{D}}\|\hat{\mathcal{T}}_\eta^*(\hat{\mathcal{T}}_\theta(x)) - x\|$.
\end{enumerate}

The reverse Duffing system admits a conserved energy $E(x)$, so $E(x(t))\equiv E(x_0)$ along trajectories. Therefore, trajectories starting in an initial set $\mathcal{X}_0$ remain in the compact, forward-invariant sublevel set $\{E\le c\}$, where $c := \sup_{x\in\mathcal{X}_0}E(x)$. To certify over a single axis-aligned region, we over-approximate $\{E\le c\}$ by a box $\mathcal{X}=\mathcal{B}(c)$ and run verification on $\mathcal{X}$.

For the Van der Pol oscillator with $\mu > 0$, trajectories converge to a unique and stable limit cycle. In our implementation, $\hat{\mathcal{T}}_\theta$ is constructed using a finite-horizon integral approximation to generate initial conditions for \eqref{eq:z_system}. For some initial conditions far from the limit cycle, the integral evaluation becomes unstable; therefore, we restrict our attention to a compact region $\mathcal{X}$ surrounding the limit cycle where $\mathcal{T}$ can be reliably computed. Since $\alpha,\beta$-CROWN operates over axis-aligned boxes, we approximate $\mathcal{X}$ by a union of small boxes centred at retained sampled points, with additional refinement near the boundary.

\subsection{Verified Results}

\begin{figure*}[t]
    \centering
    \subfloat[Reverse Duffing.\label{fig:rev_duff_plot}]{
        \includegraphics[width=0.48\textwidth]{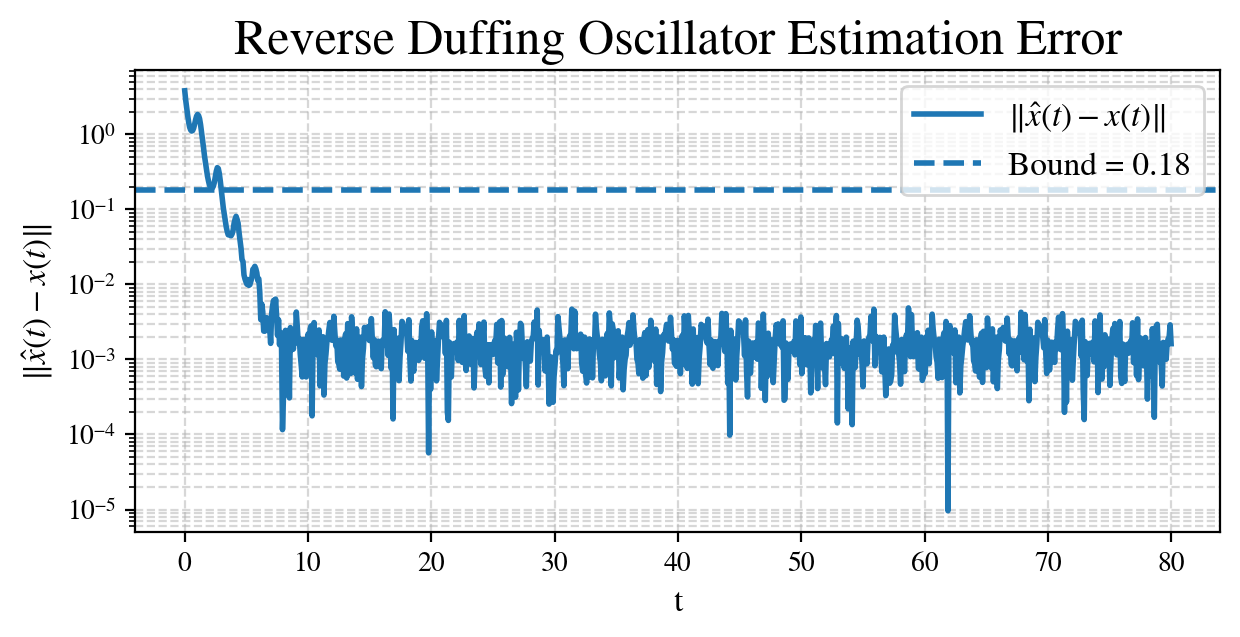}
    }\hfill
    \subfloat[Van der Pol.\label{fig:vdp_plot}]{
        \includegraphics[width=0.48\textwidth]{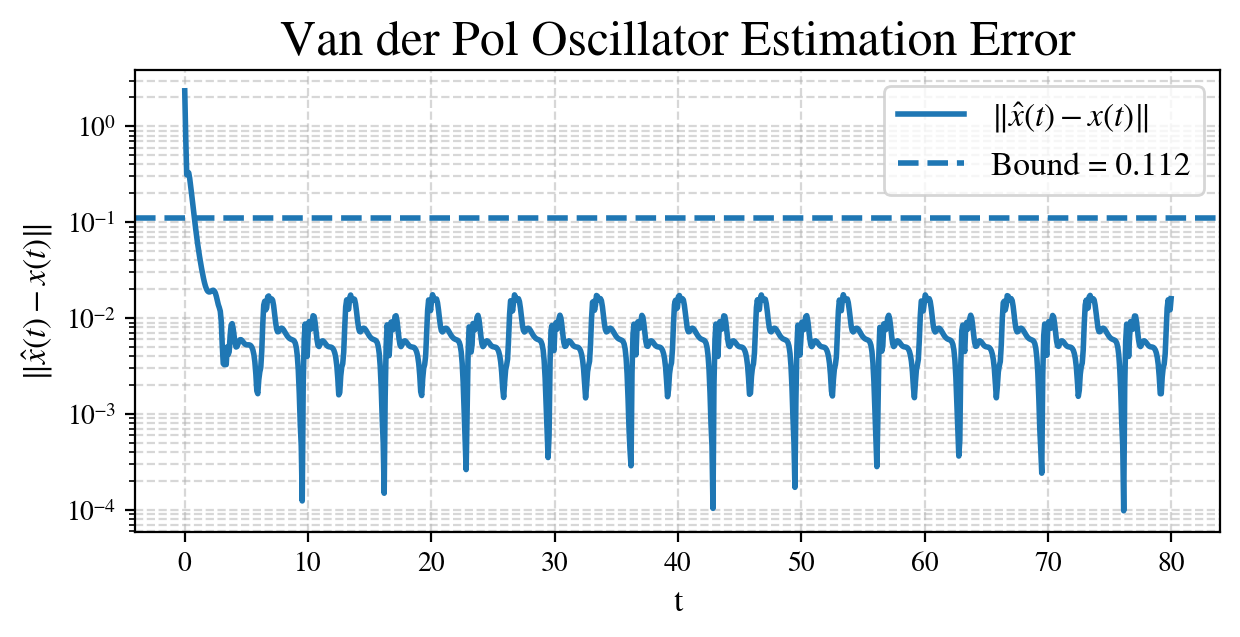}
    }
    \caption{State-estimation error trajectories for the learned KKL observers on two nonlinear benchmarks: (a) reverse Duffing and (b) Van der Pol. Solid lines show $\|\hat x(t)-x(t)\|$ along representative simulated trajectories, and dashed horizontal lines indicate the certified ultimate bounds from \eqref{eq:prop_x_ultimate} computed over the corresponding verification regions.}
    \label{fig:errors_side_by_side}
\end{figure*}

For $A=\mathrm{diag}(-2,-4,-6,-8,-10)$ we have $\lambda_{\min}(A)=2$, hence the minimum value of $\Gamma(P)$ over diagonal $P\succ 0$ is $\Gamma(P)=1/4$. For $A=\mathrm{diag}(-1,-2,-3,-4,-5)$ the minimum value of $\Gamma(P)$ over diagonal $P\succ 0$ is $\Gamma(P)=1$.

Table~\ref{tab:duff_vdp_results} reports the certified quantities and the resulting asymptotic estimation-error bound computed from \eqref{eq:prop_x_ultimate}. Figure~\ref{fig:errors_side_by_side} shows the estimation error along representative trajectories for the reverse Duffing and Van der Pol; dashed horizontal lines indicate the certified bounds.

\begin{table}
\caption{Certified quantities and resulting estimation error bound for reverse Duffing and Van der Pol.}
\label{tab:duff_vdp_results}
\centering
\small
\setlength{\tabcolsep}{3pt}
\renewcommand{\arraystretch}{1.15}
\begin{tabular}{c|c c c c}\hline
System &
$\bar{\mathcal{R}}$ &
$\mathcal{L}_{\eta}$ &
$\mathcal{E}_{\hat{\mathcal{T}}}$ &
\makecell{$ \limsup_{t \to \infty}$\\$\|\hat x(t)-x(t)\|$}\\ \hline
\makecell{Reverse\\Duffing} & $5.13\times 10^{-4}$ & $235.7$ & $5.98\times 10^{-2}$ & $0.181$\\
Van der Pol & $7.7\times 10^{-3}$ & $23$ & $2.3\times 10^{-2}$ & $0.112$\\ \hline
\end{tabular}
\end{table}

To demonstrate the bound with an added measurement error, we simulate the Van der Pol observer with additive measurement noise at $1.0\%$ of the peak output magnitude. This yields $\bar v=\max_t\|v(t)\|=0.033$, and \eqref{eq:prop_x_ultimate_noisy} gives the certified
ultimate bound $\limsup_{t\to\infty}\|\hat x(t)-x(t)\|\le 0.685$. Fig.~\ref{fig:error_disturbance} shows the estimation error along representative trajectories for the reverse Duffing and Van der Pol; dashed horizontal lines indicate the certified bound.

\begin{figure}
    \centering
    \includegraphics[width=1\linewidth]{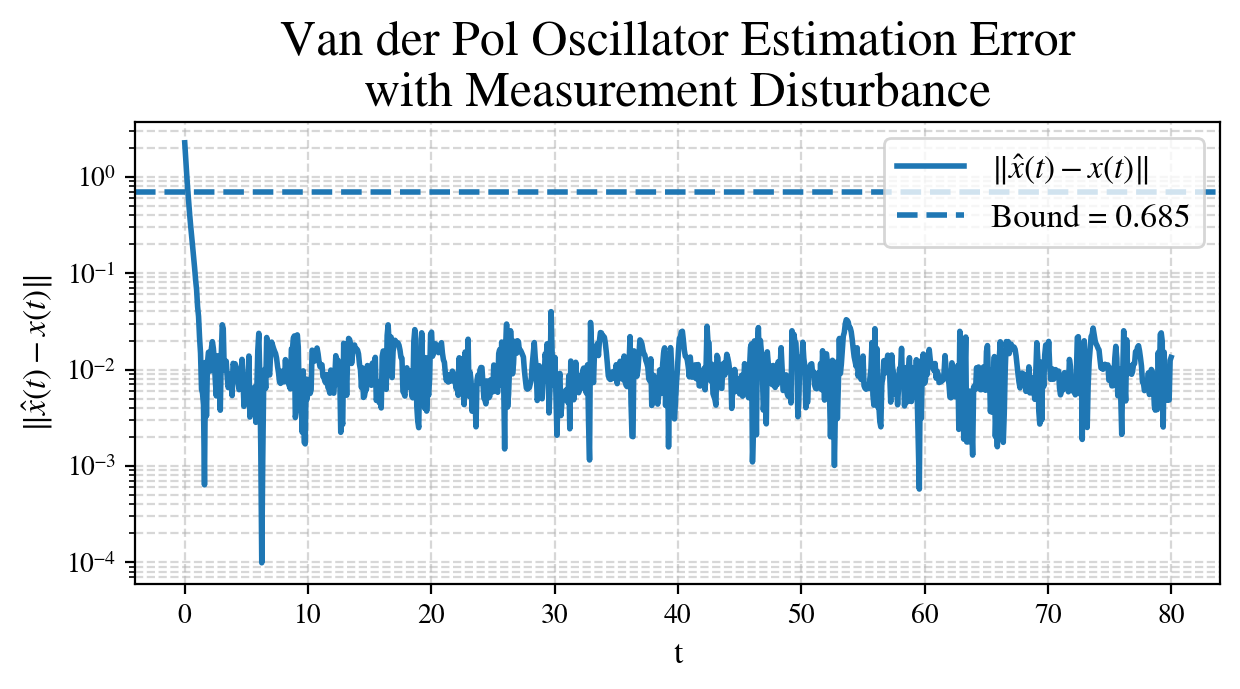}
    \caption{Van der Pol state-estimation error trajectory $\|\hat x(t)-x(t)\|$ under additive measurement noise ($1.0\%$ of peak output; $\bar v=0.033$). The dashed line indicates the certified ultimate bound from \eqref{eq:prop_x_ultimate_noisy}.}
    \label{fig:error_disturbance}
\end{figure}

\section{Conclusion}

This paper establishes a computable estimation-error bound for learning-based KKL observers using neural network verification. Starting from the observer-coordinate error system induced by an approximate solution of the KKL PDE, we derived an ultimate bound that depends only on three worst-case quantities: the PDE residual of the learned forward map $\hat{\mathcal{T}}_\theta$, the Lipschitz constant of the learned left-inverse $\hat{\mathcal{T}}_\eta^*$ on $\hat{\mathcal{T}}_\theta(\mathcal{X})$, and the reconstruction error of the learned pair $\hat{\mathcal{T}}_\theta$ and $\hat{\mathcal{T}}_\eta^*$. Each quantity can be certified on a prescribed compact region with $\alpha,\beta$-CROWN. Experiments on the reverse Duffing and Van der Pol oscillators show that the certified bound upper-bounds observed estimation-error trajectories on representative runs, and future work will focus on tighter certification and region selection, extensions to controlled systems, richer error models, and higher-dimensional benchmarks.

\addtolength{\textheight}{-12cm}   %

\bibliography{references}

\end{document}